\documentclass[sigconf,nonacm]{acmart}
\usepackage{algorithm}
\usepackage{algorithmic}

\usepackage{amsmath}

\usepackage{amssymb}
\usepackage{booktabs}
\newcommand{\rPESQ}{3.03}
\newcommand{\rDenoise}{0.98}
\newcommand{\rCodec}{0.98}
\newcommand{\rNokey}{0.51}
\newcommand{\rForge}{0.04}
\newcommand{\rHeldTPR}{0.02}

\newcommand{\name}{KeyBound}


\title{\name: Keyed and Host-Bound Learned Audio Watermarking\texorpdfstring{\\}{ }for Speech Provenance}

\author{Bangshuo Zhu}
\authornote{These authors contributed equally.}
\affiliation{%
  \institution{University of New South Wales}
  \country{Australia}
}
\author{Yuxin Cao}
\authornotemark[1]
\affiliation{%
  \institution{National University of Singapore}
  \country{Singapore}
}
\author{Weifei Jin}
\affiliation{%
  \institution{Duke University}
  \country{USA}
}
\author{Fusen Guo}
\affiliation{%
  \institution{University of New South Wales}
  \country{Australia}
}
\author{Huadong Mo}
\affiliation{%
  \institution{University of New South Wales}
  \country{Australia}
}
\author{Jingling Xue}
\affiliation{%
  \institution{University of New South Wales}
  \country{Australia}
}
\author{Wei Song}
\affiliation{%
  \institution{University of New South Wales}
  \country{Australia}
}

\begin{document}

\begin{abstract}
Audio watermarking is a proactive route to attributing synthetic speech to its source. Learned audio watermarks are typically judged by payload recovery after a fixed catalog of signal distortions such as noise, compression, filtering, and resampling. That test is necessary but not sufficient for provenance. A mark offered as evidence of origin should not be readable by an unauthorized party, should not be transferable to unrelated audio, and should not vanish when the recording is re-synthesized by a modern generative model. We present \name{}, a learned audio watermark that restores the two ingredients classical watermarking supplied and learned schemes set aside, a secret key and a host-aware carrier. \name{} masks the payload with a secret key and embeds the masked bits through a carrier modulated by a frozen spectral representation of the host, so the key governs payload access while the host-conditioned carrier resists direct transplantation. A key-independent presence head lets any party detect a mark, whereas only a key holder reads its attribution, and under the single-sample uniformity assumption a wrong-key decode clears our verification rule with probability at most $2.1\times10^{-3}$. On LibriSpeech against WavMark, AudioSeal, and Timbre, \name{} holds 1.00 detection accuracy and 0.98 bit accuracy under a spectral denoiser that costs every baseline its detection, decodes at chance without the key, and rejects transplanted carriers. Detection further transfers to held-out DAC and BigVGAN re-synthesis, though exact payload recovery degrades. Speech provenance is thus better posed as a keyed, host-bound attribution problem than as the recovery of a payload under a catalog of signal distortions fixed in advance.
\end{abstract}

\maketitle

\section{Introduction}
Advances in text-to-speech and voice conversion have made synthetic speech hard to distinguish from genuine recordings, for human listeners and automatic detectors alike \citep{valle,voicebox,aasist}. The same technology that serves accessibility and entertainment enables impersonation, financial fraud, and fabricated spoken evidence, making the origin of a recording a question that must stay answerable afterwards. Audio watermarking answers proactively \citep{audioseal,wavmark,latentwm}, embedding an imperceptible signal at creation so that origin can later be read from the waveform itself.

Most schemes share an encoder--detector design in which a generator adds a low-energy carrier to the host and a jointly trained detector recovers the payload, with recent work reporting strong recovery under noise, filtering, resampling, and compression \citep{wavmark,audioseal,silentcipher,xattnmark}. That evaluation is necessary but insufficient for provenance. A fixed distortion catalog models only accidental degradation, not a party who means to read the payload, erase the mark, or transfer attribution to audio its owner never produced. Once a watermark is offered as evidence of origin, these adversarial questions stop being optional stress tests and become the task itself.

They also expose a limitation of the design rather than of the benchmark. In most schemes no secret protects the carrier, so any party who can run the public detector reads the payload and can turn that access into an estimate of the mark and a means to suppress it. The carrier is also largely independent of the host, which makes transplantation possible, since a pattern recovered from one recording may verify once moved onto another and so fabricate an attribution, one of the forgery risks benchmarked for audio watermarks \citep{audiomarkbench}. Because the carrier is engineered to be quiet, it also tends to live in exactly the signal detail a modern re-synthesis model discards or rewrites, and a single regeneration step erases it \citep{deepshallow}, while a learned remover trained without detector access strips it outright \citep{harmonicattack}, with provable removal reported in the image domain \citep{zhao2024invisible}. These three failures share one cause. The mark is bound neither to a secret nor to the signal it marks, precisely the two ingredients classical watermarking supplied and learned schemes set aside, a secret key as in spread-spectrum watermarking \citep{cox1997secure} and a host-aware carrier as in quantization index modulation \citep{chenwornell2001} and the dirty-paper coding \citep{costa1983,gelfandpinsker1980,moulinosullivan2003} from which that whole family of codes ultimately descends.

We propose \name{}, a keyed and host-bound audio watermark that supplies both forms of binding. It retains the learned generator--detector structure of recent watermarks and changes only what the carrier depends on. The payload is first masked with a secret key, so the embedded bits carry no readable attribution without it at all. The carrier is then modulated by a frozen spectral representation of the host, making the mark signal-dependent instead of a pattern that can be lifted and reused. At verification the key recovers the payload, while a key-independent presence head decides on its own whether audio is marked at all. The only keyed learned audio watermark we know of, WAKE \citep{xu2025wake}, keys its whole transform and leaves the carrier host-independent, whereas \name{} keys the payload alone behind a public presence head and anchors the carrier in the very signal that it marks.

This design separates three properties that existing evaluations tend to conflate. No-key payload protection comes from the key mask, which randomizes the embedded bits for anyone but the verifier who can strip the mask before decoding. Transplant resistance comes from the host anchor, whose effect we measure empirically against direct transfer of a carrier onto an unrelated host. Robustness to re-synthesis is of a different kind altogether, since no key confers it and it must instead be learned under channels that rewrite low-level acoustic detail. Our experiments show that the distinction is not academic, as highly transparent additive marks survive the standard distortion catalog and still vanish under a single re-synthesis, whereas \name{}, trained explicitly for that channel, keeps its detection intact throughout the suite.

We make three major contributions:
\begin{itemize}
\item \textbf{Reframing speech provenance.} We argue that provenance demands payload keying and host binding, identify both as absent from current learned audio watermarks, and separate no-key protection, transplant resistance, and re-synthesis robustness into properties each evaluated on their own terms rather than simply assumed together.

\item \textbf{\name{} and a verification rule.} A keyed, host-bound learned watermark with a key-independent presence head, together with a provenance-oriented verification rule $V_A$ carrying a wrong-key acceptance bound (Proposition~\ref{prop:keyed_soundness}), under which a wrong-key decode clears the rule with probability at most $\beta\approx2\times10^{-3}$.

\item \textbf{Evaluation.} We test \name{} on LibriSpeech against WavMark, AudioSeal, and Timbre across signal distortion and re-synthesis, and additionally under no-key decoding and carrier transplantation. It reaches 1.00 detection accuracy and \rDenoise\ bit accuracy under a spectral denoiser that costs every baseline its detection, decodes at chance without the key, and rejects transplanted carriers outright.
\end{itemize}

\section{Related Work}

\paragraph{Learned audio watermarking.}
Most learned audio watermarks are post hoc, in that a generator adds a low-energy carrier to the host waveform and a jointly trained detector recovers the payload from it. Examples include invertible window schemes \citep{wavmark}, localization-trained generator--detector pairs \citep{audioseal}, psychoacoustic schemes \citep{silentcipher}, timbre-based schemes \citep{timbre}, dual-embedding invertible networks \citep{ideaw}, and cross-attention detectors \citep{xattnmark}. They advance capacity and imperceptibility and report robustness to a fixed distortion set, yet in nearly all of them the carrier is host-independent and unkeyed, so neither no-key payload protection nor host binding enters the design. A separate line embeds the watermark inside the generative model so that every sample it produces is marked \citep{latentwm}, which helps whoever controls that model but not the verifier who is handed only a waveform, the setting we study.

The principal exception is WAKE \citep{xu2025wake}, a keyed learned audio watermark aimed at access control and multi-watermark capacity, which gates an invertible embedding network by the key so that decoding under a wrong key falls to chance. Because the gate governs the whole transform, presence and payload are alike key-dependent. The carrier remains host-independent, the keyed claim is purely empirical, and the evaluation covers classical signal edits alone, testing neither removal nor false attribution. \name{} masks only the payload behind a key-independent presence head, so any party may detect a mark while only a key holder reads its attribution. It anchors the carrier in a frozen host representation, driving transplant verification to near zero, supplies a formal no-key statement and a wrong-key acceptance bound (Proposition~\ref{prop:keyed_soundness}), and, trained under neural re-synthesis, holds detection under a spectral denoiser where additive schemes in WAKE's own family collapse \citep{deepshallow}.

\paragraph{Classical foundations.}
Both properties that \name{} restores are old watermarking ideas. Carrier secrecy is the premise of spread-spectrum watermarking, whose secret spreading key hides the mark and protects it \citep{cox1997secure}. Host dependence is the premise of quantization index modulation, whose codes embed the payload relative to the cover signal with a minimum-distance margin \citep{chenwornell2001}. Quantization index modulation is itself an instance of communication with side information at the encoder, as in Costa's dirty-paper coding \citep{costa1983} and the Gelfand--Pinsker channel \citep{gelfandpinsker1980}, analyzed for information hiding by \citet{moulinosullivan2003}. Learned audio watermarks traded both away for transparency and distortion-robustness, whereas \name{} carries keying and host binding into a learned design and trains it under the re-synthesis attacks that the classical codes of that era never had to face.

\section{Problem Formulation}
We watermark a host waveform $x\in\mathbb{R}^{T}$ with an $n$-bit payload $m\in\{0,1\}^{n}$ under a secret key $\kappa\in\{0,1\}^{L}$ held only by the embedder and an authorized verifier. An embedder $E$ produces a marked signal $x_w=E(x,m,\kappa)$ within a perceptual budget, and an adversary may apply any transformation $a$ from an admissible class within a distortion budget. A key-independent presence head returns a score $s(y)\in[0,1]$, whereas an authorized decoder $D$, available to a key holder alone, returns a payload estimate $\hat m(a(x_w),\kappa)$. Beyond payload recovery under benign $a$, we ask three things of $(E,D)$. \emph{Robustness} operates at two levels, since presence robustness asks only that the marked/unmarked decision survive an admissible transformation, whereas payload robustness further asks $\hat m=m$. Detection accuracy (DA) measures the former and is our primary metric, bit accuracy (BA) the latter. \emph{No-key payload protection} requires that without $\kappa$ the detector output decode at chance (Proposition~\ref{prop:single_sample}), a statement about detector-side recovery, not about multi-message secrecy of the waveform. \emph{Resistance to false attribution} asks that a party without $\kappa$ should not make audio the verifier accepts under a chosen payload, which we test in the case where a carrier is lifted from a marked recording onto unrelated audio. The key governs payload access alone and confers no robustness against removal, which we obtain instead from the host-bound carrier and from training under attack channels.

Keyless post-hoc schemes leave two gaps for provenance. Absent a secret key, a public detector reads $m$ straight from $x_w$, and no amount of distortion training alters that. Absent host binding, a carrier lifted from one recording may verify on another whenever it is reusable, as we measure directly on a leading scheme, though not every keyless scheme proves transplantable. A secret key and a host-conditioned carrier address both gaps, and none of the baselines that we evaluate supplies the two together, which is precisely the gap that \name{} is built to close.

\subsection{Threat Model}
A provider embeds a watermark into the speech it generates under a secret key $\kappa$, and a verifier holding $\kappa$ later tests a recording to establish its origin. We assume the adversary knows everything but the deployment key, so the architecture and weights of $E$, $D$, and the host analysis $C$ may be public. It may run them on arbitrary inputs and trial keys and collect any number of marked recordings, yet it can neither obtain $\kappa$ nor query a verifier that holds it. Our keyed-access claim is correspondingly single-sample, because the fixed reused mask affords no multi-message secrecy and no resistance to known-payload inference, both of which fall outside the scope of what this paper claims and of what it evaluates.

The adversary holds a marked recording and pursues one of three goals. It may \emph{read the payload}, recovering $m$ without $\kappa$. It may \emph{remove the mark}, perturbing the recording within a perceptual budget until the detector stops firing, whether by signal processing, by re-synthesis through a codec, a vocoder, or a denoiser, or by a removal network trained against the scheme. It may \emph{produce a false attribution}, presenting a recording the verifier accepts under a chosen payload, either by embedding one or by transplanting a carrier from a marked recording onto audio the owner never produced. The adversary lacks only the key, and we test each of these three goals under the attacks that we implement. An attack counts only for as long as the audio stays perceptually useful, since a recording degraded past that point has already defeated the adversary's own purpose in attacking it in the first place.

\section{Method}
\name{} keeps the generator--detector structure of post-hoc watermarks and their evaluation conventions, changing two things only, what the carrier depends on and how the payload is protected. It operates on 16~kHz audio in one-second units with a 16-bit payload, and Figure~\ref{fig:overview} sketches its embedder and detector. The secret key is held by the embedder and the authorized payload decoder, whereas the presence head runs without it and so stays available to any verifying party at all.

\begin{figure*}[t]
\centering
\includegraphics[width=\textwidth]{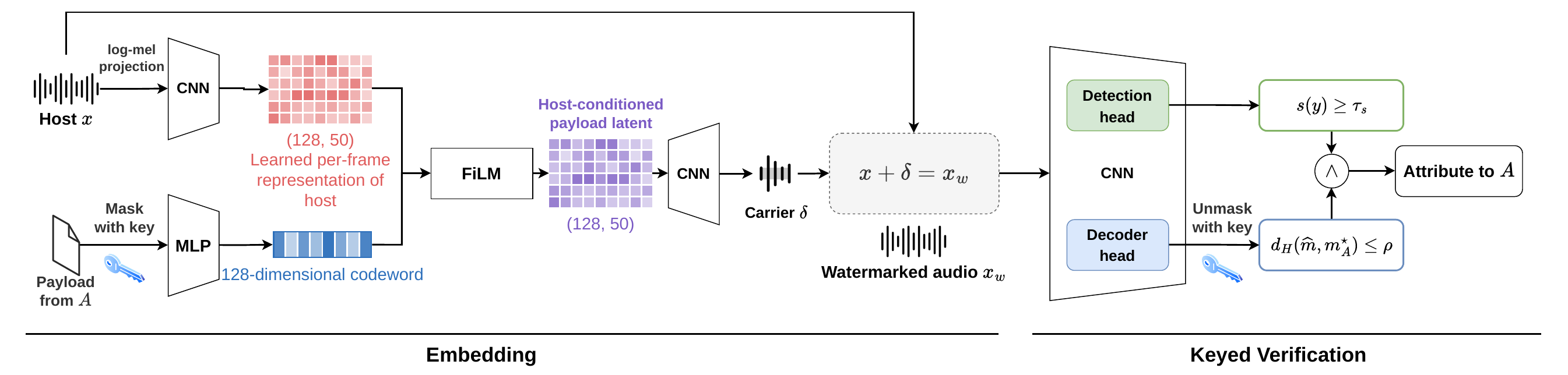}
\caption{Overview of \name{}. The payload is masked with a secret key and embedded through a carrier modulated by spectral host features $C(x)$, so the mark is both keyed and host-bound. At verification the detector produces a key-independent presence score $s(y)$, letting any party detect that audio is marked, while only a holder of $\kappa$ can unmask it. Attribution requires both, since the rule $V_A$ fires when the presence score clears $\tau_s$ \emph{and} the decode lands within $\rho$ bits of the registered payload.}
\label{fig:overview}
\end{figure*}
The key masks the payload so that the masked bits appear random to anyone who does not hold the key,
\begin{equation}
\tilde m = m \oplus \mu(\kappa), \qquad \mu(\kappa)=\kappa_{1:n},
\label{eq:xor}
\end{equation}
where $\oplus$ is the bitwise XOR and $\mu(\kappa)$ is a per-bit mask read from the key. The host analysis $C$ maps the host to latent-width features through a fixed, phase-free log-mel transform followed by a small learned projection. The generator forms the carrier by projecting the masked payload and modulating it with $C(x)$ through feature-wise linear modulation (FiLM) \citep{film},
\begin{equation}
z = \mathrm{FiLM}\big(P(\tilde m),\, C(x)\big), \qquad
\delta = \alpha\, G_\theta(z),
\label{eq:gen}
\end{equation}
where $P$ is a learned per-bit projection, $G_\theta$ is a convolutional synthesis network, and $\alpha$ scales the carrier to the perceptual budget. The marked signal adds the carrier to the host,
\begin{equation}
x_w = x + \delta.
\label{eq:marked}
\end{equation}
Because $\delta$ passes through $C(x)$ in \eqref{eq:gen}, the generator is free to make the carrier a function of the recording it marks, and such a carrier is correspondingly harder to reuse, while transplant-negative training teaches the detector to reject one moved onto a mismatched host. The masked payload remains the larger driver of the carrier and stays recoverable from it, yet the host modulation is substantial by design rather than a small perturbation, since a carrier the host barely influenced would be precisely the transferable pattern that the anchor exists to prevent, and this is exactly what the ablation that removes the anchor confirms.

\subsection{Keyed Decoding and Detection}
The detector computes a learned analysis of the received audio $y$, produces per-frame payload logits and a presence score, and pools the payload over frames weighted by presence,
\begin{equation}
h = E_\psi(y),\quad
\bar\ell = \sum_{t} \mathrm{softmax}_t \big(w_d^\top h\big)_t \,\big(W_m h_t\big),
\label{eq:pool}
\end{equation}
where $W_m$ and $w_d$ are the payload and detection heads. It then removes the key mask by flipping the sign of every logit whose bit the mask inverted, an operation that recovers the payload for a key holder and for nobody else,
\begin{equation}
\ell = \bar\ell \odot \big(1 - 2\,\mu(\kappa)\big), \qquad
\hat m = \mathbf{1}[\ell > 0].
\label{eq:decode}
\end{equation}
Detection uses the pooled presence score,
\begin{equation}
s(y)=\tfrac1{T'}\sum_t \sigma(w_d^\top h_t),
\end{equation}
calibrated once on clean unmarked audio and then held fixed.

\subsection{Training Objective}
We train the generator $G_\theta$, the detector $E_\psi$, and the projections $\phi$ end to end with the log-mel front end of $C$ frozen. Each step draws an attack channel $A$ and applies it to the marked signal before decoding, mixing the distortion catalog with re-synthesis (codec round trip, spectrogram vocoder, spectral denoiser) and with an in-loop removal network $R$ that is updated alongside the model. The full training objective then takes the form
\begin{equation}
\begin{split}
\min_{\theta,\psi,\phi} \mathbb{E}\big[ &\lambda_m \mathcal{L}_{\text{msg}} + \lambda_d \mathcal{L}_{\text{det}} + \lambda_f \mathcal{L}_{\text{forge}} \\
&+ \lambda_p \mathcal{L}_{\text{perc}} + \lambda_r \mathcal{L}_{\text{snr}} \big],
\end{split}
\label{eq:obj}
\end{equation}
where $\mathcal{L}_{\text{msg}}$ is the binary cross-entropy between the keyed decode $\ell(A(x_w),\kappa)$ and the payload $m$. The presence-head cross-entropy $\mathcal{L}_{\text{det}}$ takes marked audio as positive against three negative sets, namely clean, attacked, and remover output, which denies the head any chance of learning to equate distortion itself with the presence of a mark. The false-attribution term $\mathcal{L}_{\text{forge}}$ drives the presence score to zero for a carrier transplanted onto a shuffled host, and the final two terms bound perceptibility,
\begin{equation}
\begin{split}
\mathcal{L}_{\text{perc}} &= \mathcal{L}_{\text{mrstft}} + \mathcal{L}_{1} + 2\mathcal{L}_{\text{mel}} + \lambda_{\text{TF}}\mathcal{L}_{\text{TF}}, \\
\mathcal{L}_{\text{snr}} &= \max(0,\, \tau_{\text{snr}}-\mathrm{SNR}),
\end{split}
\label{eq:perc}
\end{equation}
where $\mathcal{L}_{\text{mrstft}}$ is the multi-resolution spectrogram loss \citep{parallelwavegan}, the psychoacoustic time--frequency masking loss $\mathcal{L}_{\text{TF}}$ places carrier energy where the host masks it, and $\tau_{\text{snr}}$ is ramped from 6 to 22~dB. The removal attack we report at test time is not $R$ but a separately trained reconstruction network with access to neither the detector nor the key. Training against the re-synthesis channels buys empirical robustness and no formal guarantee, and robustness to the fixed denoiser, where no remover appears, follows from that training and from the host-bound carrier rather than from the key, which governs payload access alone.

\subsection{Keyed Provenance Verification}
A provenance claim is a decision about a party, so we cast it as a per-party verification rule and separate what the design proves from what it merely demonstrates. Party $A$ holds a secret key $\kappa_A\in\{0,1\}^L$ known only to $A$ and an authorized verifier, registers a payload $m_A^\star\in\{0,1\}^n$, and produces marked audio $x_w^A=E(x,m_A^\star,\kappa_A)$. On a received waveform $y$ we distinguish two decisions. \textbf{Presence verification} is key-free and asks only whether the received waveform $y$ carries a mark of any kind at all,
\begin{equation} 
V_{\mathrm{pres}}(y)=\mathbf{1}[s(y)\ge\tau_s],
\label{eq:presence_verification}
\end{equation}
while \textbf{attribution verification} is keyed and asks whether $y$ carries $A$'s payload under $A$'s key,
\begin{equation}
V_A(y)=\mathbf{1}[s(y)\ge\tau_s]\wedge
\mathbf{1}[d_H(\widehat{m}(y,\kappa_A),m_A^\star)\le\rho],
\label{eq:attribution_verification}
\end{equation}
where $s(y)\in[0,1]$ is the key-independent presence score, $\hat m(\cdot,\kappa_A)$ the keyed decoder, $d_H$ Hamming distance, $\tau_s$ a presence threshold, and $\rho$ the tolerated bit-error budget, set to $\rho=2$ so that at least $n-\rho=14$ of 16 bits must agree. Presence verification is the object that detection accuracy measures, whereas attribution verification is the operational provenance decision, and it adds exactly the keyed payload check that presence alone never makes. We first establish the keyed-access guarantees that the mask affords, and then state the two further properties that we verify only empirically.

\begin{proposition}[Single-sample mask uncertainty]
\label{prop:single_sample}
For a single sample, if the mask $\mu(\kappa)$ is uniform and unknown to the adversary, then $\tilde m=m\oplus\mu(\kappa)$ is uniform in that view, and a decoder without $\kappa$ or a wrong key recovers $m$ at chance.
\begin{proof}
    The map $\mu\mapsto m\oplus\mu$ is a bijection, so an unknown uniform mask renders $\tilde m$ uniform. Recovering $m$ requires $\mu(\kappa)$, which only the verifier holds.
\end{proof}
\end{proposition}
\begin{corollary}[Fixed-target wrong-key agreement]
\label{cor:chosen_payload_attribution}
Under the single-sample uniformity assumption of Proposition~\ref{prop:single_sample}, a wrong-key decode matches any fixed target with the same probability as it matches a uniformly drawn target.
\end{corollary}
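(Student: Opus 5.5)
The plan is to formalize the wrong-key decode as a random variable whose uniformity comes directly from Proposition~\ref{prop:single_sample}, and then observe that uniformity makes agreement probability target-independent. Fix the true key $\kappa_A$ and payload $m_A^\star$, and let $\kappa'$ be a wrong key. By \eqref{eq:decode}, the wrong-key decode differs from the true-key decode by XOR with $\mu(\kappa_A)\oplus\mu(\kappa')$. Write it in the idealized noiseless form $\hat m(y,\kappa') = \tilde m \oplus \mu(\kappa') = m_A^\star\oplus\mu(\kappa_A)\oplus\mu(\kappa')$. The single-sample uniformity assumption, as in Proposition~\ref{prop:single_sample}, states that in the adversary's view the relevant mask is uniform on $\{0,1\}^n$ and independent of everything the adversary can fix. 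Concretely, $\mu(\kappa_A)\oplus\mu(\kappa')$ is uniform; for instance, $\mu(\kappa_A)$ is uniform and independent of $\kappa'$. The bijection argument of Proposition~\ref{prop:single_sample} then shows that $U:=\hat m(y,\kappa')$ is uniform on $\{0,1\}^n$. If bit errors from the channel are present, I would condition on the channel's error pattern $e$. Since $e$ is independent of the mask, $U\oplus e$ remains uniform by the same bijection.

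Next I would compare a fixed target $t\in\{0,1\}^n$ with a uniform target $T$ independent of $U$. For a fixed target, $\Pr[U=t]=2^{-n}$. More generally, the event that matters for $V_A$ is $d_H(U,t)\le\rho$, and
\[
\Pr[d_H(U,t)\le\rho]=\frac{|B(t,\rho)|}{2^n}=2^{-n}\sum_{i=0}^{\rho}\binom{n}{i},
\]
because Hamming balls in $\{0,1\}^n$ have a cardinality independent of their center, by translation invariance under XOR. For a uniform target $T$, average over $t$:
\[
\Pr[d_H(U,T)\le\rho]=\sum_t 2^{-n}\Pr[d_H(U,t)\le\rho],
\]
which equals the same constant. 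So the two probabilities coincide for exact match ($\rho=0$) and for every tolerance $\rho$. I would state the result for general $\rho$, since that is what feeds the subsequent wrong-key acceptance bound. With $n=16$ and $\rho=2$ it gives $137/65536\approx2.1\times10^{-3}$, matching the abstract.

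The main obstacle is not the counting but pinning down the assumption precisely enough that the target may be "fixed" by the adversary. The target must be chosen independently of the unknown mask, or at least without information about it. Under the single-sample view the adversary holds only one marked sample, whose decoded bits are themselves uniform. I would therefore state explicitly that $t$ is measurable with respect to the adversary's view, so $t$ is independent of $\mu(\kappa_A)$, and check that the uniformity of $U$ survives conditioning on that view. That check is exactly the content of Proposition~\ref{prop:single_sample}. I would also remark that reuse of the mask across many samples breaks this independence, which is consistent with the scope limits stated in the threat model.
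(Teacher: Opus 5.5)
Your proof is correct and follows the paper's own argument. Once Proposition~\ref{prop:single_sample} makes the wrong-key decode uniform on $\{0,1\}^n$, agreement with any fixed target has the same probability as agreement with a uniform target; your Hamming-ball counting spells this out and extends it to tolerance $\rho$, which the paper does inside the proof of Proposition~\ref{prop:keyed_soundness}.

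The one loose point is your channel-error aside, and you do not need it. There you assert, without deriving it, that the error pattern $e$ is independent of the mask. The learned carrier is generated from $\tilde m$, so attack-induced errors may well correlate with it. The corollary, however, takes uniformity of the recovered bits as its hypothesis.
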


\begin{proof}
Fix any target $m_A^\star$. By Proposition~\ref{prop:single_sample} the recovered bits are uniform on $\{0,1\}^n$ in the adversary's view, and a uniform distribution assigns the same probability to agreement with $m_A^\star$ as to agreement with a uniformly drawn target.
\end{proof}

\begin{proposition}[Wrong-key acceptance bound for $V_A$]
\label{prop:keyed_soundness}
If $\hat m(y,\kappa_A)$ is uniform in the adversary's view, as Proposition~\ref{prop:single_sample} gives for a single sample under an unknown uniform mask, then
\begin{equation}
\Pr \left[V_A(y)=1\right]
\le
\beta
:=
\sum_{i=0}^{\rho}
\binom{n}{i}2^{-n}.
\label{eq:keyed_soundness}
\end{equation}
For $n=16,\rho=2$, $\beta=137/2^{16}\approx2.1\times10^{-3}$.
\begin{proof}
    By Proposition~\ref{prop:single_sample} and Corollary~\ref{cor:chosen_payload_attribution}, $\hat m(y,\kappa_A)$ is uniform on $\{0,1\}^n$ in the adversary's view, so $\Pr[d_H(\hat m,m_A^\star)\le\rho]=\sum_{i\le\rho}\binom{n}{i}2^{-n}=\beta$, and the presence indicator, being an indicator, can only lower that probability.
\end{proof}
\end{proposition}

The bound rests on the keyed payload check rather than on $\tau_s$, so inflating the presence-only false-positive rate gains nothing while 14 of $A$'s 16 keyed bits must still be reproduced. The two properties stated next are empirical.

\paragraph{Robust completeness.} $A$'s own audio should verify under every admissible $a\in\mathcal A$, that is $\Pr[V_A(a(x_w^A))=1]\ge 1-\varepsilon_a$. Because $V_A$ demands presence ($s\ge\tau_s$) and payload ($d_H\le\rho$) together, its true-accept rate tracks whichever conjunct is weaker, and the two diverge whenever re-synthesis preserves presence while destroying exact payload. Table~\ref{tab:va} reports the rate per attack, where DA stands for the presence conjunct and the bit budget stands for the payload conjunct of the same rule.

\paragraph{Transplant resistance.} $A$'s genuine carrier should not transfer, so for $\delta_A=x_w^A-x$ lifted onto an unrelated host $x'$ we want $\Pr[V_A(x'+\delta_A)=1]\approx 0$. Proposition~\ref{prop:keyed_soundness} does not imply this, since $\delta_A$ is here a legitimate carrier rather than a guess, and the property is instead an empirical consequence of the host-conditioned carrier and of the $\mathcal L_{\text{forge}}$ training term, and we measure it only against direct residual transplantation of the carrier onto an unrelated host, which is the attack that Table~\ref{tab:security} reports in full.

\paragraph{Scope.} The tolerance $\rho$ trades true acceptance against wrong-key acceptance. Proposition~\ref{prop:keyed_soundness} is single-sample and assumes an honest verifier, so it delivers keyed access rather than multi-message secrecy and bounds no adversary who adaptively queries the verifier to infer the mask. We do not claim cryptographic unforgeability, since an adversary that observes a valid masked codeword and can invoke or approximate the embedder may attempt replay or re-embedding on a new host. Preventing that needs the payload authenticated jointly with a host fingerprint, by a MAC or a signature, which we leave to future work.

\section{Experiments}
We evaluate \name{} along four axes, robustness to attack, imperceptibility, no-key decoding, and transplant-based false attribution, taking the last two as what distinguishes it.

\subsection{Experiment Setup}

\paragraph{Datasets.} We train on the 28,539 one-second clips at 16~kHz of the LibriSpeech \citep{librispeech} train-clean-100 split and test on 300 held-out clips from test-clean. Clip selection and payload draws are seeded, so all methods see the same 300 clips in the same order under the same payloads. Per-clip secret keys are drawn from the global random state rather than a fixed seed, which is immaterial because keys and payloads are uniform, and the model is trained over random keys. LibriSpeech is the corpus on which the released baselines were themselves developed and evaluated, so no method gains an advantage from a domain shift, and its speaker-disjoint split lets us hold out speakers as well as utterances. It is read English speech throughout, which bounds how far the conclusions drawn below can reasonably be carried beyond that setting.

\paragraph{Baselines.} We compare \name{} against three representative released schemes, each run from the authors' own code and checkpoints, namely the invertible window scheme WavMark \citep{wavmark}, the localization-trained AudioSeal \citep{audioseal}, and the timbre watermark \citep{timbre} built to survive voice cloning. At submission time the official XAttnMark project page\footnote{\url{https://liuyixin-louis.github.io/xattnmark/}} offered the paper, results, and demonstrations but no public inference code or checkpoint, leaving us unable to run it under our attack protocol. Timbre runs at its native 22.05~kHz with a 10-bit payload and is evaluated under the same attack algorithms at that rate, while the remaining methods and \name{} run at 16~kHz. WavMark is run on 1.1~second clips because its decoder partitions the signal into chunks of $16{,}000+\lfloor 16{,}000\times0.1\rfloor=17{,}600$ samples and needs one complete chunk, which a 1.0~second clip cannot supply, whereas every other method evaluated here runs on clips of exactly 1.0~second in length.

\paragraph{Attacks.} We group the transformations into two families. \textit{Signal distortions} comprise additive noise, low- and high-pass filtering, resampling, gain, 8-bit quantization, and MP3 at 32~kbps, the catalog the additive schemes are tuned to survive. \textit{Re-synthesis} regenerates the waveform through an EnCodec round trip at 6~kbps \citep{encodec}, a Griffin-Lim spectrogram vocoder \citep{griffinlim}, and a spectral denoiser built on noise estimation and spectral gating.

\paragraph{Metrics.} Detection accuracy (DA) \citep{demark} is the true-positive rate at a threshold fixed at the 1\% false-positive point of the presence-score ROC on unmarked clean clips, and we reserve TPR@1\%FPR for the attack-matched evaluation of Table~\ref{tab:fpr}. Bit accuracy (BA) is the fraction of payload bits recovered correctly. Imperceptibility is read from wideband PESQ \citep{pesq}, short-time intelligibility STOI \citep{stoi}, and signal-to-noise ratio (SNR), for all of which higher is better. No-key decoding is read from the bit accuracy an adversary obtains by running the detector without the key, where 0.5 is chance, and transplant-based false attribution from the rate at which a carrier lifted onto unrelated audio still verifies against the payload registered for the original recording.

\paragraph{Training protocol.} The generator and detector are SEANet-style convolutional networks \citep{seanet} with a 128-dimensional latent, the host analysis is a fixed log-mel front end with a small learned projection, and the payload is $n{=}16$ bits masked by a 16-bit key. We train with AdamW at a learning rate of $2\times10^{-4}$ for 30,000 steps on batches of 12 one-second clips. A curriculum forms the code on a loud mark before tightening the SNR target from 6 to 22~dB over the first 7,000 steps, enables the learned remover after a 7,000-step warmup, ramps the false-attribution term over 3,000 steps and the perceptual terms over 5,000 after a 2,000-step delay, and raises the attack-channel probability from 0.25 to 1. The objective weights are $\lambda_m{=}12$, $\lambda_d{=}1.5$, $\lambda_p{=}0.5$, and $\lambda_f{=}1$, with the time--frequency masking and SNR-hinge terms at unit weight, and the remover is updated once per generator step at the same learning rate under a carrier scale of $\alpha{=}1$.

\begin{table*}[t]
\caption{Robustness across the full attack suite. Bold marks the best value in each row, with DA presence and BA payload.}
\label{tab:robust}
\centering
\resizebox{0.8\textwidth}{!}{%
\begin{tabular}{lcccccccc}
\toprule
Attack & \multicolumn{2}{c}{WavMark} & \multicolumn{2}{c}{AudioSeal} & \multicolumn{2}{c}{Timbre} & \multicolumn{2}{c}{KeyBound} \\
\cmidrule(lr){2-3}\cmidrule(lr){4-5}\cmidrule(lr){6-7}\cmidrule(lr){8-9}
 & DA & BA & DA & BA & DA & BA & DA & BA \\
\midrule
Clean & 1.000 & 1.000 & 0.990 & 0.912 & 1.000 & \textbf{1.000} & \textbf{1.000} & 0.999 \\
\multicolumn{9}{@{}l}{\emph{Signal distortion}}\\
Noise (20\,dB) & 0.007 & 0.501 & 0.490 & 0.639 & 0.363 & 0.909 & \textbf{1.000} & \textbf{0.999} \\
Noise (10\,dB) & 0.000 & 0.498 & 0.287 & 0.545 & 0.003 & 0.667 & \textbf{1.000} & \textbf{0.974} \\
Lowpass (4\,kHz) & 1.000 & \textbf{1.000} & 0.993 & 0.912 & 1.000 & 0.991 & \textbf{1.000} & 0.999 \\
Highpass (500\,Hz) & 1.000 & 1.000 & 0.993 & 0.918 & 1.000 & \textbf{1.000} & \textbf{1.000} & 0.999 \\
Resample (0.8) & 0.947 & 0.970 & 0.967 & 0.898 & 1.000 & \textbf{1.000} & \textbf{1.000} & 0.999 \\
Gain ($\times0.5$) & 0.993 & 0.996 & 0.947 & 0.908 & 1.000 & \textbf{1.000} & \textbf{1.000} & 0.999 \\
Quantize (8-bit) & 0.747 & 0.864 & 0.937 & 0.852 & 0.977 & 0.999 & \textbf{1.000} & \textbf{0.999} \\
MP3 (32\,kbps) & 0.723 & 0.849 & 0.907 & 0.861 & 1.000 & \textbf{1.000} & \textbf{1.000} & 0.999 \\
Smoothing & 0.940 & 0.969 & 0.970 & 0.907 & 1.000 & 0.988 & \textbf{1.000} & \textbf{0.999} \\
Echo & 0.677 & 0.821 & 0.977 & 0.870 & 1.000 & \textbf{1.000} & \textbf{1.000} & 0.994 \\
\multicolumn{9}{@{}l}{\emph{Re-synthesis}}\\
EnCodec (6\,kbps) & 0.000 & 0.498 & 0.850 & 0.568 & 0.013 & 0.562 & \textbf{1.000} & \textbf{0.982} \\
Griffin-Lim & 0.983 & 0.988 & 0.080 & 0.524 & 1.000 & \textbf{1.000} & \textbf{1.000} & 0.702 \\
Denoiser & 0.000 & 0.498 & 0.173 & 0.545 & 0.007 & 0.712 & \textbf{1.000} & \textbf{0.983} \\
\bottomrule
\end{tabular}
}
\end{table*}

\paragraph{Computing platform.} All training and evaluation run on a single NVIDIA RTX~6000 Ada GPU (48\,GB) hosted by an Intel Xeon w7-3565X with 502\,GB of memory, under Ubuntu 22.04 LTS with CUDA~12.8, Python~3.10, PyTorch~2.8.0, and torchaudio~2.8.0. A full 30,000-step training run takes about 1.6~hours on this hardware, and a full evaluation pass over the whole attack suite takes under fifteen minutes for each of the methods.

\subsection{Main Results}

Under the spectral denoiser \name{} holds DA~=~1.00 and BA~=~\rDenoise\ where WavMark, AudioSeal, and Timbre lose detection, its payload decodes at chance without the key, and a carrier transplanted onto unrelated audio rarely verifies.

\paragraph{Robustness.} \name{} holds DA~=~1.00 on every perturbation, and under the EnCodec round trip it recovers the payload itself (BA~=~0.98) rather than presence alone, departing from earlier additive designs that lose the message to codec re-synthesis. Only Griffin-Lim reduces exact recovery (BA~=~0.70), where detection still stays at 1.00, so the mark is confirmed present on audio whose payload can no longer be attributed. Under the spectral denoiser \name{} keeps DA~=~1.00 and BA~=~0.98 where every baseline loses detection, WavMark falling to 0.00, AudioSeal to 0.17, and Timbre to 0.01. The baselines fail unevenly across codec and vocoder, each surviving one re-synthesis and collapsing on the other, whereas \name{} survives both. Because DA fixes its threshold on clean unmarked audio, we check that the attacks do not inflate it. Seven of the fourteen realize a false-positive rate at or below 0.013 and the rest reach 0.03 to 0.13, with 10~dB noise at 0.80, yet recalibrating on same-attack negatives still leaves DA at 1.000 throughout (Table~\ref{tab:fpr}), so the elevated rates reflect a shift in the absolute score level rather than any loss of separability between the two classes.

\paragraph{Imperceptibility.} The host-bound carrier and attack-channel training cost some transparency, placing \name{} at a wideband PESQ of 3.03, STOI of 0.981, and SNR of 24.6~dB. That is below WavMark (4.18) and Timbre (3.55) on PESQ, yet it is the highest intelligibility among the learned schemes, well clear of Timbre's STOI of 0.715. \name{} therefore buys detection that survives re-synthesis, and it pays for that with a price in perceptual quality that is real, measured, and openly stated.

\paragraph{No-key decoding and false attribution.} Without the key, \name{}'s detector recovers the payload at \rNokey\ bit accuracy, chance exactly as Proposition~\ref{prop:single_sample} requires, whereas all three baselines are keyless and yield their payload to anyone running the public decoder. For false attribution we lift a recovered carrier onto unrelated audio and ask whether it verifies, requiring the presence score to clear that method's own 1\%-FPR threshold and the payload to decode. \name{} verifies at \rForge\ against 0.59 for AudioSeal and 0.227 for Timbre, though WavMark also resists (0.000), so host independence alone does not settle transplantability. \name{} is nonetheless the only scheme evaluated supplying both properties at once, since each baseline either exposes its payload to any reader or accepts a transplanted carrier (Tables~\ref{tab:security} and~\ref{tab:nokey}).

\begin{table}[t]
\caption{True-accept rate of the attribution rule $V_A$ across the full attack suite, which requires both conjuncts to hold.}
\label{tab:va}
\centering
\footnotesize
\resizebox{0.9\columnwidth}{!}{%
\begin{tabular}{lcccc}
\toprule
Attack & WavMark & AudioSeal & Timbre & KeyBound \\
\midrule
Clean & 1.000 & 0.813 & 1.000 & \textbf{1.000} \\
\multicolumn{5}{@{}l}{\emph{Signal distortion}}\\
Noise (20\,dB) & 0.000 & 0.017 & 0.370 & \textbf{0.997} \\
Noise (10\,dB) & 0.000 & 0.000 & 0.007 & \textbf{0.980} \\
Lowpass & 1.000 & 0.810 & 1.000 & \textbf{1.000} \\
Highpass & 1.000 & 0.807 & 1.000 & \textbf{1.000} \\
Resample & 0.947 & 0.727 & 1.000 & \textbf{1.000} \\
Gain & 0.993 & 0.760 & 1.000 & \textbf{1.000} \\
Quantize & 0.743 & 0.533 & 0.977 & \textbf{1.000} \\
MP3 & 0.710 & 0.593 & 1.000 & \textbf{1.000} \\
Smoothing & 0.940 & 0.767 & 1.000 & \textbf{1.000} \\
Echo & 0.653 & 0.637 & \textbf{1.000} & 0.993 \\
\multicolumn{5}{@{}l}{\emph{Re-synthesis}}\\
EnCodec & 0.000 & 0.010 & 0.003 & \textbf{0.990} \\
Griffin-Lim & 0.987 & 0.000 & \textbf{1.000} & 0.200 \\
Denoiser & 0.000 & 0.000 & 0.010 & \textbf{0.973} \\
\bottomrule
\end{tabular}
}
\vspace{-2mm}
\end{table}

\paragraph{Attribution under the verification rule.} Table~\ref{tab:va} reports $V_A$ itself rather than its conjuncts. \name{} attributes at least 0.97 of clips on thirteen of the fourteen attacks, Griffin-Lim being the sole exception. The baselines attribute reliably under filtering and the milder edits, so the suite is not uniformly discriminative, yet they collapse under noise and re-synthesis, detecting or decoding but seldom both. The best any reaches there is 0.370 (Timbre at 20~dB noise), and none exceeds 0.010 under the codec or denoiser. AudioSeal never exceeds 0.813 even on clean audio, where a bit accuracy of 0.91 leaves one clip in five short of the budget. For \name{} the presence conjunct holds at 1.000 throughout, so attribution is bound by the payload conjunct alone, and under Griffin-Lim a mean BA of 0.70 leaves only 0.20 of clips clearing 14 of 16 bits, where WavMark and Timbre pass nearly intact. We accordingly claim provenance \emph{attribution} for \name{} under signal distortion, the codec, and the denoiser, and \emph{presence detection} alone under vocoder re-synthesis, where the exact payload no longer survives.

\subsection{Ablation Studies}
We remove each mechanism in isolation, all else fixed (Table~\ref{tab:ablation}). Every mechanism supplies one property, and the two that guard against an adversary cost a little imperceptibility.

\paragraph{The key mask.} The key mask carries no ablation row, since XOR masking is dimension-preserving and leaves every other metric untouched, so the no-key decode already reported captures its effect in full (\rNokey\ bit accuracy against 1.00 with the key). It is necessary but not sufficient on its own, since the host anchor and attack-channel training between them supply the properties that a key mask on its own cannot deliver.

\paragraph{The host anchor.} Removing the anchor while keeping the key raises the transplant verify rate from \rForge\ to 1.00, so every transplanted carrier verifies once the carrier has collapsed into one reusable pattern (cross-host similarity 1.00 against 0.53, Table~\ref{tab:hostbound}). The anchor is more than a security mechanism, as the host-independent variant also loses most of its payload under the codec (EnCodec BA 0.52 against \rCodec) while presence detection stays at 1.00 in both, so conditioning on the host helps the payload survive re-synthesis rather than hindering it. Its only price is transparency, where a PESQ of \rPESQ\ stands against 3.64 for the ablated model.

\paragraph{Re-synthesis in training.} Robustness to re-synthesis comes from the attack channel, not the architecture. Dropping the codec from that channel lowers EnCodec bit accuracy from \rCodec\ to 0.54 and costs presence detection too (DA 0.92 against 1.00), while the denoiser, which remains in the channel, is unaffected. The ablated model is also quieter at a PESQ of 3.45, since a mark never trained to survive re-synthesis need not carry the energy that survival demands.

\paragraph{Payload length.} Capacity trades off against robustness. \name{} recovers both 8 and 16 bits exactly on clean audio and through the denoiser, whereas at 32 and 64 bits accuracy falls to 0.72 and 0.64 and does not recover with further training, which places reliable capacity at roughly 16 bits.

\begin{table}[t]
\caption{Ablation of \name{}'s mechanisms. Each row removes a single mechanism with all else held fixed.}
\label{tab:ablation}
\centering
\small
\begin{tabular*}{\columnwidth}{@{\extracolsep{\fill}}lccccc c}
\toprule
& Transpl.$\downarrow$ & \multicolumn{2}{c}{EnCodec} & \multicolumn{2}{c}{Denoiser} & PESQ$\uparrow$ \\
\cmidrule(lr){3-4}\cmidrule(lr){5-6}
& (verify) & BA$\uparrow$ & DA$\uparrow$ & BA$\uparrow$ & DA$\uparrow$ & \\
\midrule
\textbf{\name{}} & \textbf{0.04} & \textbf{0.98} & \textbf{1.00} & \textbf{0.98} & \textbf{1.00} & 3.03 \\
\ \ w/o anchor & 1.00 & 0.52 & 1.00 & 0.97 & 1.00 & 3.64 \\
\ \ w/o codec & 0.05 & 0.54 & 0.92 & 0.98 & 1.00 & 3.45 \\
\bottomrule
\end{tabular*}
\end{table}

\section{Discussion}

\paragraph{Imperceptibility and re-synthesis robustness.} Robustness to re-synthesis trades against imperceptibility for a structural reason, since a psychoacoustic loss earns transparency by placing the carrier where the host masks it, which is also where a codec or denoiser discards it. Sweeping \name{}'s SNR budget traces the trade, with denoiser bit accuracy falling from 0.998 to 0.792 as the mark grows quieter while the host-bound carrier holds detection far above the point where the baselines collapse, at a PESQ of \rPESQ.

\paragraph{Limitations.} The principal open limitation is adaptive removal, which falls outside the wrong-key bound. A learned remover trained to reconstruct un-watermarked from marked audio, with access to neither the detector nor the key, drives detection to \rHeldTPR\ within twenty thousand steps and leaves no audible degradation. This does not contradict our re-synthesis results, which concern fixed non-adaptive transformations, and it is the general predicament of all post-hoc watermarking once an attacker can train against the deployed detector. Our evidence is read English speech at one-second scale, so other domains and longer clips still await study.

\paragraph{Deployment scope.} \name{} is not always the right choice, since benign processing with no adversary is served more cheaply by a transparent mark. Its setting is one where a party may read, remove, or fabricate an attribution and keyless designs supply neither no-key protection nor transplant resistance, so an authorized verifier gains a keyed, host-bound rule with a wrong-key acceptance bound (Proposition~\ref{prop:keyed_soundness}) rather than the single passed test that a keyless design is able to offer in its place.

\section{Conclusion}
Distortion-only evaluation does not suffice for speech provenance, where an adversary may read, remove, or transfer a mark. \name{} answers with a key mask, a host-bound carrier, and attack-channel training. On read speech it keeps detection under re-synthesis (spectral denoiser DA~=~1.00, BA~=~\rDenoise), decodes at chance without the key, and rejects transplanted carriers, with detection transferring to held-out DAC and BigVGAN as exact payload recovery degrades. Adaptive removal remains the open limitation, an unbounded attack stripping the mark yet leaving the audio intact. We therefore read keyed, host-bound watermarking as a step toward treating provenance as the adversarial attribution problem it is, rather than as distortion-robust payload recovery.

\bibliographystyle{ACM-Reference-Format}
\bibliography{refs}

\clearpage
\appendix
\section{Appendix}

\subsection{False-Positive Control}
DA calibrates $\tau$ on clean unmarked audio, so we also score unmarked clips under each attack and report the realized false-positive rate at that same threshold (Table~\ref{tab:fpr}). Seven of the fourteen attacks realize a rate at or below 0.013, so their DA means what it says. The other seven sit between 0.030 and 0.127, with EnCodec lowest and the denoiser highest, while 10~dB noise reaches 0.803, a point at which a reported DA of 1.000 is in fact realized at roughly an 80\% false-positive rate rather than 1\%. None of this costs detection, since the matched-threshold DA is 1.000 for every one of these attacks, which shift the absolute score level while preserving the margin between marked and unmarked audio. Under the denoiser the two distributions do not overlap at all, as the lowest-scoring marked clip still outscores the highest-scoring unmarked one, so enforcing a true 1\% rate, or even 0.1\%, leaves DA unchanged. Among the baselines only AudioSeal drifts comparably, and it does so more severely, as its EnCodec DA of 0.85 sits at a 0.58 false-positive rate and falls to 0.09 once matched, leaving most of that apparent detection a codec artifact rather than a recovered watermark. WavMark and Timbre hold the rate near zero on the reported perturbations, because their decode-derived scores cannot false-positive on unmarked audio, which carries no payload for them to recover.

\begin{table}[b]
\caption{False-positive control for \name{}. Values marked $\dagger$ exceed twice the nominal rate.}
\label{tab:fpr}
\centering
\begin{tabular*}{\columnwidth}{@{\extracolsep{\fill}}lccc}
\toprule
Attack & DA (reported) & FPR & DA (matched) \\
\midrule
Clean & 1.000 & 0.007 & 1.000 \\
\multicolumn{4}{@{}l}{\emph{Signal distortion}}\\
Noise (20\,dB) & 1.000 & 0.047$^\dagger$ & 1.000 \\
Noise (10\,dB) & 1.000 & 0.803$^\dagger$ & 1.000 \\
Lowpass (4\,kHz) & 1.000 & 0.007 & 1.000 \\
Highpass (500\,Hz) & 1.000 & 0.103$^\dagger$ & 1.000 \\
Resample (0.8) & 1.000 & 0.013 & 1.000 \\
Gain ($\times0.5$) & 1.000 & 0.000 & 1.000 \\
Quantize (8-bit) & 1.000 & 0.007 & 1.000 \\
MP3 (32\,kbps) & 1.000 & 0.007 & 1.000 \\
Smoothing & 1.000 & 0.003 & 1.000 \\
Echo & 1.000 & 0.050$^\dagger$ & 1.000 \\
\multicolumn{4}{@{}l}{\emph{Re-synthesis}}\\
EnCodec (6\,kbps) & 1.000 & 0.030$^\dagger$ & 1.000 \\
Griffin-Lim & 1.000 & 0.053$^\dagger$ & 1.000 \\
Denoiser & 1.000 & 0.127$^\dagger$ & 1.000 \\
\bottomrule
\end{tabular*}
\end{table}

\subsection{No-Key Access and the Transplant Protocol}
Table~\ref{tab:security} collects the two adversarial axes on which \name{} differs from the baselines. All three baselines are keyless, so their payload is readable by any party who runs the public decoder, and the no-key column records that as \emph{public} rather than as a number. The transplant attack extracts the carrier residual $\delta=x_w-x$ from a marked clip together with its clean host, adds it unscaled to a different host, and tests verification. Because recovering $\delta$ requires the clean host, this is an oracle-residual attack stronger than the stated threat model, which grants the adversary only the marked recording, and \name{}'s \rForge\ transplant verify rate therefore holds even under that attacker-favorable setting. Verification requires both conjuncts of $V_A$, and the presence conjunct uses each method's own 1\%-FPR threshold rather than a shared constant, since presence scores across methods are on different scales. AudioSeal needs a note of its own, since its residual is not a low-energy carrier at all (host-to-residual SNR 0.9~dB against 24.6 and 37.4~dB for \name{} and WavMark), so a transplant carries over part of the source recording along with the mark and its rate should be read as an upper bound.

\begin{table}[t]
\caption{No-key decoding and false attribution.}
\label{tab:security}
\centering
\begin{tabular}{lcc}
\toprule
Method & No-key acc.$\downarrow$ & Transplant verify$\downarrow$ \\
\midrule
WavMark & public & \textbf{0.000} \\
AudioSeal & public & 0.587 \\
Timbre & public & 0.227 \\
\midrule
\textbf{KeyBound (ours)} & \textbf{0.512} & 0.043 \\
\bottomrule
\end{tabular}
\end{table}

\subsection{Perceptual Cost of the Mark}
Table~\ref{tab:quality} reports what each mark costs on clean marked audio, scored against the unmarked host. WavMark is the most transparent scheme on all three measures, and the dagger records that it is evaluated on the 1.1~second clips its decoder requires rather than the 1.0~second clips the other methods use, matching the protocol of Table~\ref{tab:robust}. \name{} sits lowest on PESQ at \rPESQ{} while holding the second-highest STOI at 0.981 and an SNR of 24.6~dB, so its carrier is paid for in perceived quality while intelligibility stays close to the best of the four. Timbre reverses that profile, reaching a PESQ of 3.55 on an STOI of 0.715, the lowest intelligibility here by a wide margin. Read alongside Table~\ref{tab:robust}, this table gives the price side of the robustness reported there, and the operating-point sweep below traces how that price moves as the SNR budget of the mark is tightened or relaxed in either direction.

\begin{table}[t]
\caption{Imperceptibility. $^\dagger$ measured on 1.1\,s clips.}
\label{tab:quality}
\centering
\begin{tabular}{lccc}
\toprule
Method & PESQ$\uparrow$ & STOI$\uparrow$ & SNR (dB)$\uparrow$ \\
\midrule
WavMark$^{\dagger}$ & \textbf{4.18} & \textbf{0.993} & \textbf{37.4} \\
AudioSeal & 3.44 & 0.968 & 18.2 \\
Timbre & 3.55 & 0.715 & 27.7 \\
\midrule
\textbf{KeyBound (ours)} & 3.03 & 0.981 & 24.6 \\
\bottomrule
\end{tabular}
\end{table}

\subsection{Attack Configuration}
Table~\ref{tab:attackcfg} lists the attack parameters read from the evaluation code. All of them operate at 16~kHz except EnCodec, which resamples internally to 24~kHz and back again across the course of its round trip. The two noise levels share one row, as do the two spectral operations built on the same 1024/256 STFT, namely the Griffin-Lim vocoder, which discards phase and re-synthesizes from magnitude alone, and the denoiser, which estimates a noise floor and gates against it. The learned remover in the last row is the separately trained reconstruction network of the limitation study, which sees marked audio and clean targets and never sees either the detector or the key that produced them.

\begin{table}[t]
\caption{Attack configuration, read from the evaluation code.}
\label{tab:attackcfg}
\centering
\begin{tabular*}{0.9\columnwidth}{@{\extracolsep{\fill}}lll}
\toprule
Attack & Parameter / model & Strength \\
\midrule
Noise & additive Gaussian & SNR 20/10 dB \\
Lowpass & biquad low-pass & cutoff 4 kHz \\
Highpass & biquad high-pass & cutoff 500 Hz \\
Resample & rational resample & factor 0.8 \\
Gain & amplitude scale & $\times0.5$ \\
Quantize & bit-depth reduction & 8-bit \\
MP3 & LAME codec & 32 kbps \\
Smoothing & moving average & window 7 \\
Echo & delay $+$ decay & 0.1 s, 0.4 \\
EnCodec & \texttt{encodec\_24khz} & 6 kbps \\
Griffin-Lim & STFT 1024/256 & magnitude only \\
Denoiser & spectral gating & STFT 1024/256 \\
Remover & SEANet U-net (ch 24) & marked$\to$clean \\
\bottomrule
\end{tabular*}
\end{table}

\subsection{Held-Out Re-Synthesis}
EnCodec and the spectral denoiser appear in \name{}'s training attack channel, whereas DAC \citep[Descript Audio Codec,][official 16\,kHz checkpoint, all 12 quantizers]{dac} and BigVGAN \citep[][\texttt{nvidia/\allowbreak bigvgan\_\allowbreak v2\_\allowbreak 24khz\_\allowbreak 100band\_\allowbreak 256x}]{bigvgan} do not and therefore test attack generalization (Table~\ref{tab:heldout}). Presence detection transfers well to both, reaching DA 0.92 and 1.00 at the fixed LibriSpeech presence threshold, which we do not recalibrate and which agrees with the ROC TPR@1\%FPR to rounding, while exact payload recovery degrades to BA 0.70 and 0.50, which reproduces on unseen channels the same two-level pattern that Griffin-Lim already shows in Table~\ref{tab:robust}. The three quality columns compare the attacked output to the clean host, so they measure how much each channel damages the audio rather than how imperceptible the watermark is. SI-SNR is time-aligned by an integer cross-correlation lag, which matters for DAC, whose fixed 8-sample codec delay moves it from -15.0 raw to 10.5 aligned, whereas BigVGAN stays negative after alignment (-2.9 $\to$ -2.7) because mel re-synthesis rewrites the phase it cannot recover. Raw unaligned values are in the audit file.

\begin{table}[t]
\caption{Held-out re-synthesis on the same 300 \texttt{test-\allowbreak clean} clips, neither channel seen in training.}
\label{tab:heldout}
\centering
\footnotesize
\begin{tabular*}{0.8\columnwidth}{@{\extracolsep{\fill}}lccccc}
\toprule
Attack & DA & BA & PESQ & STOI & SI-SNR \\
\midrule
DAC & 0.92 & 0.70 & 2.99 & 0.96 & 10.5 \\
BigVGAN & 1.00 & 0.50 & 2.88 & 0.98 & -2.7 \\
\bottomrule
\end{tabular*}
\end{table}

\subsection{Operating Point and Adaptive Removal}
Figure~\ref{fig:oppoint_app}(a) traces \name{}'s detection--quality trade-off as the SNR budget varies under a fixed 30,000-step schedule, where a quieter mark reaches higher PESQ while denoiser bit accuracy falls from 0.998 at the loudest setting to 0.792 at the quietest, and the deployed model sits between them at a PESQ of \rPESQ\ with \rDenoise\ denoiser bit accuracy. Figure~\ref{fig:oppoint_app}(b) turns to the adaptive remover, against which detection survives a weak attacker (1.00 at one thousand steps) before falling to 0.17 by four thousand and to \rHeldTPR\ by twenty thousand. Table~\ref{tab:remover} adds the quality of the removed audio over the same 200 clips, which is what makes this a genuine limitation rather than a nuisance, since once detection has collapsed at four thousand steps the remover output is closer to the clean host than the marked audio ever was (PESQ 4.18 to 4.42 against \rPESQ\ for the mark itself), so removal costs the attacker nothing in audible quality and simply leaves the mark gone.

\begin{figure*}[t]
\centering
\includegraphics[width=0.62\textwidth]{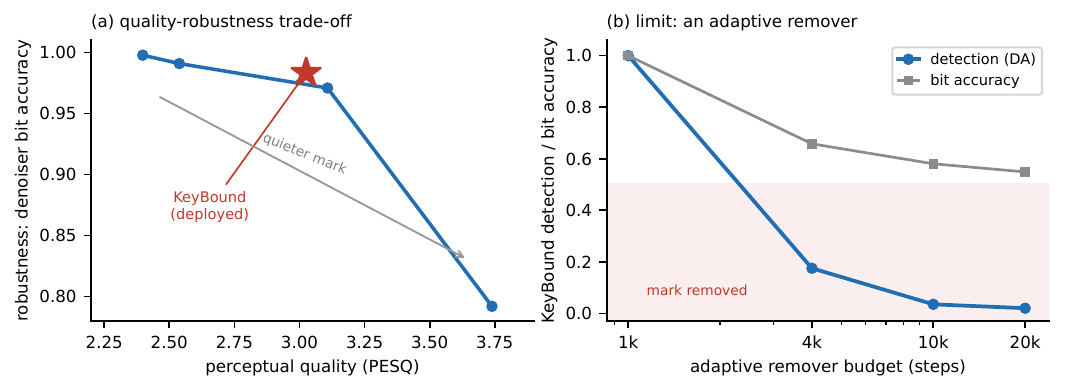}
\caption{(a)~Sweeping the SNR budget traces the quality--robustness frontier, and the deployed model (star) sits at a PESQ of \rPESQ\ with \rDenoise\ denoiser bit accuracy. (b)~Detection and bit accuracy under a learned reconstruction remover (marked$\to$clean, no detector access) of increasing training budget. The shaded region ($\mathrm{DA}<0.5$) marks a removed watermark.}
\label{fig:oppoint_app}
\end{figure*}

\begin{table}[t]
\caption{Learned reconstruction remover across training budgets, with quality measured against the clean host.}
\label{tab:remover}
\centering
\small
\begin{tabular*}{0.85\columnwidth}{@{\extracolsep{\fill}}lccccc}
\toprule
Budget & DA & BA & PESQ$\uparrow$ & STOI$\uparrow$ & SNR$\uparrow$ \\
\midrule
1k & 1.00 & 1.00 & 3.02 & 0.98 & 24.6 \\
4k & 0.17 & 0.66 & 4.18 & 0.99 & 32.8 \\
10k & 0.04 & 0.58 & 4.37 & 0.99 & 35.4 \\
20k & 0.02 & 0.55 & 4.42 & 0.99 & 36.9 \\
\bottomrule
\end{tabular*}
\end{table}

\subsection{Compound Attacks}
Chaining two transformations is a stronger threat than either applied alone. Table~\ref{tab:compound} reports detection accuracy at the clean-calibrated threshold under four such chains, on all of which \name{} keeps DA~=~1.00, while AudioSeal, already fragile under a single re-synthesis, drops to 0.41 and below. Its worst chain ends in vocoder re-synthesis and reaches 0.05, close to the 0.08 it scores under Griffin-Lim alone in Table~\ref{tab:robust}, so the chain that finishes on its weakest single channel is the one that costs it the most.

\begin{table}[t]
\caption{Detection accuracy under two-stage attacks.}
\label{tab:compound}
\centering
\begin{tabular}{lcc}
\toprule
Compound attack & \name{} & AudioSeal \\
\midrule
EnCodec $\to$ noise & 1.00 & 0.41 \\
Denoiser $\to$ resample & 1.00 & 0.25 \\
EnCodec $\to$ denoiser & 1.00 & 0.22 \\
MP3 $\to$ Griffin-Lim & 1.00 & 0.05 \\
\bottomrule
\end{tabular}
\end{table}

\subsection{Host Binding and Wrong-Key Decoding}
Table~\ref{tab:hostbound} checks host dependence directly. Holding payload and key fixed, \name{}'s carriers across 300 host clips show a mean pairwise correlation of 0.53, well below the w/o-anchor variant, whose carriers are identical at correlation 1.00 and thus reduce to one reusable pattern. Cosine similarity gives the same ordering at the same two values, so the gap does not depend on which similarity we read. This is consistent with the transplant gap in Table~\ref{tab:security}, although transplant verification remains the direct false-attribution metric and these similarities support host dependence empirically rather than proving it. Table~\ref{tab:nokey} reports keyed access over 300 marked clips, where the payload decodes exactly under the correct key and at chance under a zero mask or a random wrong key, the latter averaged over $K{=}10$ wrong keys per clip with the spread given as a standard deviation. A bit accuracy near 0.5 means the authorized decoder recovers no payload without the correct mask, which is the wrong-key behaviour that our no-key challenge measures, and it carries no claim about the multi-message secrecy of the marked waveform itself.

\begin{table}[t]
\caption{Cross-host carrier similarity at a fixed payload and key, over 300 host clips.}
\label{tab:hostbound}
\centering
\small
\begin{tabular}{lcc}
\toprule
Method & Carrier corr.$\downarrow$ & Cosine sim.$\downarrow$ \\
\midrule
\name{} & 0.53 & 0.53 \\
w/o anchor & 1.00 & 1.00 \\
\bottomrule
\end{tabular}
\end{table}

\begin{table}[t]
\caption{No-key and wrong-key decoding of \name{}.}
\label{tab:nokey}
\centering
\small
\begin{tabular}{lcc}
\toprule
Decoding setting & Bit accuracy & Interpretation \\
\midrule
Correct key & 1.00 & exact payload recovery \\
No key / zero mask & 0.51 & chance level \\
Random wrong key & $0.50\pm0.04$ & chance level \\
\bottomrule
\end{tabular}
\end{table}

\subsection{Wrong-Key Acceptance and Payload Length}
Proposition~\ref{prop:keyed_soundness} bounds the probability that the attribution rule $V_A$ accepts a decode made with the wrong key, and we restate here the constant and its scaling. Without $\kappa_A$ the keyed decode $\hat m(y,\kappa_A)$ is uniform on $\{0,1\}^n$ in the adversary's view (Proposition~\ref{prop:single_sample}, Corollary~\ref{cor:chosen_payload_attribution}), so a false accept demands that a uniform draw fall within Hamming distance $\rho$ of the registered payload $m_A^\star$,
\begin{equation*}
  \beta(n,\rho)=\Pr_{u\sim\mathcal U\{0,1\}^n} \big[d_H(u,m_A^\star)\le\rho\big]=2^{-n}\sum_{i=0}^{\rho}\binom{n}{i}.
\end{equation*}
The presence gate $\mathbf 1[s(y)\ge\tau_s]$ multiplies this by a value in $\{0,1\}$ and can only lower it, so the bound holds whatever the presence threshold and whatever attack inflates the presence-only false-positive rate. At the deployed operating point ($n=16$, $\rho=2$) this gives $\beta=137/2^{16}\approx2.1\times10^{-3}$, and because $\beta$ falls combinatorially in $n$ at fixed $\rho$ (Table~\ref{tab:beta}), the bound tightens by almost two orders of magnitude for every eight bits added. This is the statistical face of the capacity trade-off seen in the payload-length sweep, where an 8-bit payload is too short to bound anything meaningfully while lengths beyond 16 demand a payload robustness the carrier cannot sustain. The tolerance $\rho$ moves the trade the other way, since a larger $\rho$ raises the true-accept rate of $V_A$ and enlarges $\beta$ along with it, so we deploy $\rho=2$ throughout.

\begin{table}[t]
\caption{Wrong-key acceptance bound $\beta(n,\rho{=}2)$ by payload length. Lower is stronger.}
\label{tab:beta}
\centering
\small
\begin{tabular}{lcccc}
\toprule
$n$ (bits) & 8 & 16 & 24 & 32 \\
\midrule
$\beta$ & $1.4{\times}10^{-1}$ & $2.1{\times}10^{-3}$ & $1.8{\times}10^{-5}$ & $1.2{\times}10^{-7}$ \\
\bottomrule
\end{tabular}
\end{table}

\begin{table}[t]
\caption{Confidence intervals (95\%) over the test set.}
\label{tab:ci}
\centering
\footnotesize
\begin{tabular}{llc}
\toprule
Metric & Value & 95\% CI \\
\midrule
\multicolumn{3}{@{}l}{\emph{Detection / security rates (Clopper--Pearson)}}\\
\name{} DA, denoiser & 1.000 & [0.988, 1.000] \\
\name{} DA, EnCodec & 1.000 & [0.988, 1.000] \\
AudioSeal DA, EnCodec & 0.850 & [0.804, 0.888] \\
\name{} transplant verify & 0.043 & [0.023, 0.073] \\
AudioSeal transplant verify & 0.587 & [0.529, 0.643] \\
\midrule
\multicolumn{3}{@{}l}{\emph{Mean bit accuracy (clip-level bootstrap)}}\\
\name{} BA, denoiser & 0.985 & [0.980, 0.989] \\
\name{} BA, EnCodec & 0.983 & [0.979, 0.986] \\
\name{} no-key BA & 0.499 & [0.485, 0.513] \\
AudioSeal BA, denoiser & 0.553 & [0.538, 0.568] \\
\bottomrule
\end{tabular}
\end{table}

\subsection{Confidence Intervals}
Table~\ref{tab:ci} reports 95\% confidence intervals over the test clips, with exact Clopper--Pearson intervals for the binary rates and a 2000-resample clip-level bootstrap for the mean bit accuracies. They support the trends the aggregate tables report, with detection at 1.00 under codec and denoiser and no-key decoding at chance, and no \name{} interval meets the matching AudioSeal one on detection, on denoiser bit accuracy, or on transplant verification. A dedicated per-clip pass produces them, so a point estimate may differ from the aggregate tables in the third decimal wherever the per-clip key draw affects the quantity being measured.

\end{document}